\documentclass[letterpaper, 10 pt, conference]{ieeeconf}

\IEEEoverridecommandlockouts                              %

\usepackage{graphicx,amsmath,amsbsy,amssymb}
\usepackage{colortbl}	
\usepackage{tabularx}
\usepackage{theorem}	
\usepackage{cite}		
\usepackage{url}
\usepackage{xcolor}
\usepackage{comment}

\makeatletter
\g@addto@macro\normalsize{%
  \setlength{\abovedisplayskip}{6pt plus 2pt minus 3pt}%
  \setlength{\belowdisplayskip}{6pt plus 2pt minus 3pt}%
  \setlength{\abovedisplayshortskip}{3pt plus 1pt minus 2pt}%
  \setlength{\belowdisplayshortskip}{3pt plus 1pt minus 2pt}%
}
\makeatother

\theorembodyfont{\upshape}
\newtheorem{theorem}{Theorem}
\newtheorem{assumption}{Assumption}

\newtheorem{proposition}{Proposition}
\newtheorem{definition}{Definition}
\newtheorem{corollary}{Corollary}

\newcommand{\R}{\mathbb{R}}
\newcommand{\E}{\mathbb{E}}

\newcommand{\sqnormM}[2]{\left\|#1\right\|_{#2}^{2}}

\newcommand{\diag}{\mathrm{diag}}
\newcommand{\argmin}{\mathop{\mathrm{arg\,min}}}
\newcommand{\lmin}{\lambda_{\min}}

\title{Partial Observation Amplifies Model Mismatch in MAP Estimation \\
via Information-Curvature Margins}

\author{Junsei Ito and Yasuaki Wasa
\thanks{J.~Ito and Y.~Wasa are with the Department of Electrical Engineering and Bioscience, Waseda University, Tokyo 169-8555, JAPAN. 
{\tt\small \{distinction0625@moegi.,wasa@\}waseda.jp}}
\thanks{This work was partially supported by the Japan Science and Technology Agency (JST), ACT-X, Grant No.\ JPMJAX25C3.}
\thanks{\copyright~2026 IEEE. Personal use of this material is permitted. Permission from IEEE must be obtained for all other uses, in any current or future media, including reprinting/republishing this material for advertising or promotional purposes, creating new collective works, for resale or redistribution to servers or lists, or reuse of any copyrighted component of this work in other works.}
}

\begin{document}
\maketitle
\thispagestyle{empty}
\pagestyle{empty}

\begin{abstract}
This paper theoretically analyzes how system model mismatch displaces finite-horizon maximum a posteriori (MAP) initial-state estimates in controlled dynamical systems under partial observation.
From pathwise sensitivity analysis, the initial-state nominal--oracle displacement called MAP shift is decomposed into a model-side mismatch injection and an estimator-side curvature resistance to identify a sensor-dependent information-curvature margin as the amplification bottleneck.
The margin is governed by the weakest posterior-curvature direction, so that sensor configurations that maximize aggregate information can still be fragile to mismatches.
We connect the margin to nominal Gauss--Newton curvature and to the Bayesian Fisher information matrix, distinguishing instance-wise mismatch robustness from design-time inferability.
The margin admits a computable nominal proxy in nonlinear systems, becomes explicit in the linear time-invariant case, and is validated through two numerical examples.
\end{abstract}

%

%%%%%%%%%%%%%%%%%%%%%%%%%%%%%%%%%%%%%%%%%%%%%%%%%%%%%
%%%%%%%%%%%%%%%%%%%%%%%%%%%%%%%%%%%%%%%%%%%%%%%%%%%%%

\section{Introduction}
\label{sec:intro}

Data-driven control has attracted considerable attention as an effective approach to stabilizing and improving the performance of complex dynamical systems in engineering applications~\cite{dorfler2023CSM1,dorfler2023CSM2}. 
Grey-box modeling tools, including physics-informed neural networks~\cite{Raissi19} and sparse model discovery~\cite{Brunton16}, have expanded the range of nominal dynamical models available for state estimation and control design.
Residual system model mismatch, however, remains unavoidable due to parameter uncertainty and unmodeled dynamics.
In Sim2Real settings, this discrepancy is often called the reality gap~\cite{realitygap25}.
The impact of the model mismatch on estimation becomes particularly subtle under partial observation.
The primary reasons are not only the system norm of the mismatch but also the local posterior geometry estimated from the sensor configuration and the state-transition flow.

In this paper, we investigate the amplification mechanism in finite-horizon maximum a posteriori (MAP) initial-state estimation for deterministic dynamical systems with Gaussian prior and measurement noise under partial observation.
Rawlings et al. \cite{Rao03,Rawlings12} propose optimization-based state estimation based on full-information and moving-horizon formulations for nonlinear systems. 
Schiller et al. \cite{Schiller23} analyze robust stability of moving-horizon estimation under disturbances.
The papers \cite{Krener09,Shen18,Jujian15} analyze observability to quantify the difficulty of initial-state estimation through the observability Gramian and its empirical extensions for nonlinear systems.
In the papers \cite{VanTreesBell,Kunwoo23}, the Bayesian Fisher information matrix (BFIM) and the associated estimability Gramian extend the observability analysis by incorporating prior information into a design-time inferability metric.
Sayed \cite{Sayed01} addresses uncertainty-aware estimator design for robust filtering.
Bonnans and Shapiro \cite{Bonnans00} analyze how objective perturbations induce solution shifts in parameterized optimization.
However, none of these viewpoints provides an explicit instance-wise law from the system model mismatch to the displacement of a selected MAP solution under partial observation.
In particular, minimum-eigenvalue design criteria based on the observability Gramian and the BFIM quantify worst-direction inferability under a fixed model but do not bound the estimation bias induced by the system model mismatch.

To address this gap, this paper investigates the system model mismatch as a perturbation of the MAP first-order condition under partial observation.
By comparing a nominal estimator based on a known system function with an oracle estimator based on the unknown ideal function, a pathwise sensitivity argument identifies a sensor-dependent information-curvature margin as the amplification bottleneck.

The main contributions of this paper are as follows.
\begin{itemize}
\item We derive an instance-wise decomposition of the initial-state nominal--oracle displacement called MAP shift into a model-side mismatch injection and an estimator-side curvature resistance through a path-averaged posterior Hessian.
We identify the information-curvature margin as the amplification bottleneck (Theorems~\ref{thm:path_gain}--\ref{thm:mm_to_map}).
\item We show a structural bound from system model mismatch to MAP shift, with bridges from the pathwise margin to nominal Gauss--Newton curvature and the BFIM, distinguishing instance-wise robustness from design-time inferability (Theorem~\ref{thm:bfim}, Corollary~\ref{cor:proxy_bridge}).
\item The information-curvature margin admits a computable nominal proxy and a small-gain-type certificate (Corollary~\ref{cor:certificate}), and becomes explicit with a tight gain bound in the linear time-invariant case (Section~\ref{sec:lti}).
\item We numerically show that sensor configurations maximizing aggregate nominal information can remain fragile in the weakest curvature direction (Section~\ref{sec:exp}).
\end{itemize}

\paragraph*{Notations}
Throughout, $\|\cdot\|$ denotes the Euclidean norm for vectors and the spectral norm for matrices, $\|z\|_M:=\sqrt{z^\top M z}$ the weighted norm for $M\succ 0$, and $\lmin(\cdot)$ the minimum eigenvalue. The relation $A\succ 0$ ($A\succeq 0$) means that $A$ is symmetric positive definite (semidefinite). 
The function $f$ is of class $C^2$ if it is twice continuously differentiable.
For integers $i\le j$, let $z_{i:j}:=(z_i,z_{i+1},\ldots,z_j)$.

%%%%%%%%%%%%%%%%%%%%%%%%%%%%%%%%%%%%%%%%%%%%%%%%%%%%%
%%%%%%%%%%%%%%%%%%%%%%%%%%%%%%%%%%%%%%%%%%%%%%%%%%%%%

\section{Problem formulation}
\label{sec:prob}

We consider a discrete-time \emph{ideal mathematical model} that exactly reproduces a real physical plant with state $x_k^\star\in\mathcal{X}$ and control input $u_k\in\mathcal U\subset\R^m$:
\begin{align}
  x_{k+1}^\star &= f^\star(x_k^\star,u_k), \ k\in\mathbb{Z}, \label{eq:dyn_real}
\end{align}
where $\mathcal{U}$ is compact, and $\mathcal{X}\subset\R^n$ denotes a compact set containing all state trajectories and optimizer segments considered below.
This formulation is standard in model-based control and learning~\cite{Raissi19,Brunton16}.
We assume that $u_k$ is known, while the ideal system function $f^\star$ is unknown.

We also introduce a \emph{nominal mathematical model} with $u_k$, the measurement output $y_k\in\R^p$, and the estimated state $x_k\in\mathcal{X}$ at step $k\in\mathcal{K}:=\{0,1,\ldots,N{-}1\}$, $N\ge 1$:
\begin{align}
  x_{k+1} =& f^0(x_k,u_k), \ k\in\mathcal{K}, \quad 
  x_0 \sim \mathcal{N}(\mu_0,\Sigma_x), \label{eq:dyn} \\
  y_k =& h(x_k)+v_k, \quad 
  v_k \stackrel{\mathrm{i.i.d.}}{\sim} \mathcal{N}(0,R),\label{eq:obs}
\end{align}%
where $f^0$ is the known nominal system function available to the estimator. 
Here, $v_k$ is zero-mean Gaussian noise with covariance $R\,(\succ 0)$.
Assume $f^\star$ and $f^0$ are of class $C^2$ on $\R^n\times\mathcal{U}$, and $h$ is of class $C^2$ on $\R^n$.
For the input sequence $U:=u_{0:N-1}$ and relevant initial conditions, the state trajectories generated by both $f^\star$ and $f^0$ remain in $\mathcal{X}$ over the horizon.
Let $Y:=y_{0:N-1}$ denote the measurement sequence.

In this paper, we focus on partial observation, i.e., $p<n$, so that only part of the state is measured directly.
In practice, the input sequence is designed from the nominal model, whereas the physical plant evolves according to the unknown ideal function $f^\star$.
Our goal is to quantify how sensor selection via the limited measurement map $h(\cdot)$ affects the estimation impact of the system model mismatch under partial observation.
Due to the partial observation with measurement noise and the unknown ideal dynamics \eqref{eq:dyn_real}, the initial state $x_0$ is assigned the Gaussian prior in \eqref{eq:dyn}, with known mean $\mu_0$ and covariance matrix $\Sigma_x\,(\succ 0)$.
 
To evaluate this, we define a system model mismatch by
\begin{equation}
f_\Delta(x,u):=f^0(x,u)-f^\star(x,u), \label{eq:mismatch}
\end{equation}
which encompasses unmodeled dynamics and parametric uncertainty. 
The main challenge lies in analyzing how the unknown mismatch $f_\Delta$ propagates into the state estimation error under partial observation when estimating $x_0$ from the given input-output dataset $(Y,U)$.
To this end, we introduce the maximum a posteriori (MAP) estimate of $x_0$ under a given $f$, which is obtained by
\begin{align}
& \hat x_0(f) \in \argmin_{x_0\in\mathcal D}\ \Phi_f(x_0;Y,U), \label{eq:map}\\
& \Phi_f(x_0;Y,U) \notag \\
& \quad :=\frac{1}{2}\left[\sqnormM{x_0-\mu_0}{\Sigma_x^{-1}}
 +\sum_{k=0}^{N-1}
   \sqnormM{r_{f,k}(x_0;Y,U)}{R^{-1}}\right],
\label{eq:Phi}
\end{align}
where the innovation is defined as 
$r_{f,k}(x_0;Y,U):= y_k-h(\phi^{f,U}_k(x_0))$
with the transition flow described by
$\phi^{f,U}_{k+1}(x_0) := f(\phi^{f,U}_k\!(x_0), u_k)$, $k\in\mathcal{K}$, and $\phi^{f,U}_0\!(x_0) = x_0$.
The set $\mathcal D\subset\mathcal X$ is the search region of the MAP estimation in \eqref{eq:map}. Any convex and compact subset of $\mathcal X$ can serve as $\mathcal D$. The set $\mathcal D$ is not the support of the Gaussian prior.

In the sensitivity analysis, we compare a selected pair of nominal and oracle minimizers, denoted by $\hat x_0(f^0)$ and $\hat x_0(f^\star)$, that lie in $\mathrm{int}(\mathcal D)$.
When the selected minimizers are global over $\mathcal D$, they coincide with the restricted MAP estimates in~\eqref{eq:map}.
The nominal--oracle displacement
\begin{equation}
  \hat{x}_\Delta:=\hat{x}_0(f^0)-\hat{x}_0(f^\star)
  \label{eq:map_shift}
\end{equation}
is termed the MAP shift.
The MAP shift isolates the mismatch-induced component of the estimation error and excludes the oracle error caused by noise and prior.
Then, we introduce the posterior Hessian and the Bayesian Fisher information matrix (BFIM), also called the estimability Gramian~\cite{VanTreesBell,Kunwoo23}, in Definition~\ref{def:info}.
\begin{definition} \label{def:info}
For a function $f$, the posterior Hessian at $x_0$ is defined by $H_f(x_0;Y,U):=\nabla_{x_0}^2\Phi_f(x_0;Y,U)$.
Under the ideal function $f^\star$, the BFIM is defined by $G_N^o(U):= -\E[\partial^2 \log p_\star(x_0,Y\,|\,U)/\partial x_0^2]$,
where $p_\star(x_0,Y\,|\,U)$ is the joint density of $(x_0,Y)$ induced by the prior $x_0\sim\mathcal N(\mu_0,\Sigma_x)$, the true transition function $f^\star$, and the observation model~\eqref{eq:obs}.
\end{definition}
The matrix $H_f$ is the local curvature of the negative log-posterior.
When $H_f\succ 0$, it acts as a local information matrix that resists perturbations around $x_0$.
The BFIM is its design-time counterpart in expectation~\cite{VanTreesBell}.
A larger minimum eigenvalue of the BFIM implies a tighter Bayesian Cram\'er--Rao bound on the average estimation error~\cite{VanTreesBell,Kunwoo23}.

Under partial observation, information reaches $x_0$ only through the composition of the sensor map $h$ and the flow $\phi^{f,U}_k$, and the effect of the mismatch on estimation is no longer immediate.

%%%%%%%%%%%%%%%%%%%%%%%%%%%%%%%%%%%%%%%%%%%%%%%%%%%%%
%%%%%%%%%%%%%%%%%%%%%%%%%%%%%%%%%%%%%%%%%%%%%%%%%%%%%

\section{Main Results: Amplification Mechanism}
\label{sec:main}

The system model mismatch \eqref{eq:mismatch} does not determine the MAP shift by its size alone.
Under partial observation, the posterior curvature that resists the mismatch depends on the sensor configuration, namely on the measurement map $h$.
The mismatch can be harmless for some sensing configurations and fragile for others.
Since $\hat x_0(f^\star)$ satisfies the oracle first-order condition, the effect of using the nominal function $f^0$ enters through the mismatch injection
\begin{equation}
g_\Delta:=\nabla\Phi_{f^0}(\hat x_0(f^\star);Y,U).
\label{eq:Delta_g}
\end{equation}
Under this setting, we obtain the following results. 
Parts of the proofs are presented in the Appendix.

The objectives $\Phi_{f^0}(\cdot;Y,U)$ and $\Phi_{f^\star}(\cdot;Y,U)$ may admit multiple local minimizers in $\mathcal D$.
To analyze a selected pair of local minimizers, we introduce Assumption~\ref{ass:reg}.

\begin{assumption}\label{ass:reg}
For a given $(Y,U)$, $\Phi_{f^0}(\cdot;Y,U)$ is of class $C^2$ on an
open set containing the segment
$\{\hat x_0(f^\star)+s\hat{x}_\Delta:s\in[0,1]\}$, $\hat x_0(f^0)$ is a local minimizer of $\Phi_{f^0}(\cdot;Y,U)$, and $\hat x_0(f^\star)$ is a local minimizer of $\Phi_{f^\star}(\cdot;Y,U)$.
\end{assumption}
To analyze the sensitivity of the Hessian depending on the trajectory generated by the segment, we obtain Theorem~\ref{thm:path_gain} and Corollary~\ref{cor:weighted_gain}.
\begin{theorem}\label{thm:path_gain}
Under Assumption~\ref{ass:reg}, the equality
\begin{equation}
\bar H\,\hat{x}_\Delta=-g_\Delta
\label{eq:path_identity}
\end{equation}
with the path-averaged posterior Hessian
\begin{equation}
\bar H:=
\int_0^1
H_{f^0}\bigl(\hat x_0(f^\star)+s\hat{x}_\Delta;Y,U\bigr)\,ds
\label{eq:Hbar}
\end{equation}
holds. 
If $\lmin(\bar H)>0$, the inequality
\begin{equation}
\|\hat{x}_\Delta\|\le\mu\|g_\Delta\|
\label{eq:path_bound}
\end{equation}
holds, where $\mu:=1/\lmin(\bar H)$.
\end{theorem}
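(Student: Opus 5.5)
The plan is to apply the fundamental theorem of calculus to the gradient map $\nabla\Phi_{f^0}(\cdot;Y,U)$ along the segment joining the oracle and nominal minimizers. By Assumption~\ref{ass:reg}, $\Phi_{f^0}(\cdot;Y,U)$ is $C^2$ on an open set containing $\{\hat x_0(f^\star)+s\hat x_\Delta:s\in[0,1]\}$. Hence the vector-valued function $\gamma(s):=\nabla\Phi_{f^0}(\hat x_0(f^\star)+s\hat x_\Delta;Y,U)$ is $C^1$ on $[0,1]$, and the chain rule gives $\gamma'(s)=H_{f^0}(\hat x_0(f^\star)+s\hat x_\Delta;Y,U)\,\hat x_\Delta$. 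Integrating componentwise yields
\begin{equation*}
\nabla\Phi_{f^0}(\hat x_0(f^0);Y,U)-\nabla\Phi_{f^0}(\hat x_0(f^\star);Y,U)=\bar H\,\hat x_\Delta ,
\end{equation*}
where we pull the constant vector $\hat x_\Delta$ outside the integral to obtain $\bar H$ as defined in \eqref{eq:Hbar}.

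Next we evaluate the two endpoint terms. Since $\hat x_0(f^0)$ is a local minimizer of $\Phi_{f^0}$ lying in $\mathrm{int}(\mathcal D)$, and $\Phi_{f^0}$ is differentiable there, the first-order necessary condition gives $\nabla\Phi_{f^0}(\hat x_0(f^0);Y,U)=0$. The second term is, by definition \eqref{eq:Delta_g}, exactly $g_\Delta$. Substituting both facts gives $0-g_\Delta=\bar H\hat x_\Delta$, which is \eqref{eq:path_identity}.

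The oracle stationarity $\nabla\Phi_{f^\star}(\hat x_0(f^\star))=0$ is not needed for the identity itself. It only explains why $g_\Delta$ is the mismatch injection, namely $g_\Delta=\nabla\Phi_{f^0}-\nabla\Phi_{f^\star}$ evaluated at $\hat x_0(f^\star)$, so I would remark on it without relying on it.

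For the bound, $\bar H$ is symmetric because each integrand Hessian is symmetric by $C^2$ regularity. If $\lmin(\bar H)>0$, then $\bar H\succ0$ is invertible and $\|\bar H^{-1}\|=1/\lmin(\bar H)=\mu$. Thus
\begin{equation*}
\|\hat x_\Delta\|=\|\bar H^{-1}g_\Delta\|\le\mu\|g_\Delta\|,
\end{equation*}
which is \eqref{eq:path_bound}. Equivalently, $\lmin(\bar H)\|\hat x_\Delta\|^2\le\hat x_\Delta^\top\bar H\hat x_\Delta=-\hat x_\Delta^\top g_\Delta\le\|\hat x_\Delta\|\|g_\Delta\|$, and dividing by $\|\hat x_\Delta\|$ gives the same bound (the case $\hat x_\Delta=0$ is trivial).

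The main point requiring care is justifying the integral identity, that is, that the integral in \eqref{eq:Hbar} is well defined and that $\hat x_0(f^0)$ is a stationary point. Both follow from the assumed $C^2$ regularity on a neighborhood of the segment and from the interior location of the selected minimizer. No convexity of $\mathcal D$ beyond containment of the segment is needed.
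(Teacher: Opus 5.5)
Your proposal is correct and follows the paper's proof essentially step for step. Both apply the fundamental theorem of calculus to $s\mapsto\nabla\Phi_{f^0}(\hat x_0(f^\star)+s\hat x_\Delta;Y,U)$, use nominal stationarity at $s=1$ and the definition of $g_\Delta$ at $s=0$, and then invert $\bar H\succ0$ with $\|\bar H^{-1}\|=1/\lmin(\bar H)$; your extra remarks on the symmetry of $\bar H$, the quadratic-form version of the bound, and the fact that oracle stationarity is not needed for the identity are accurate but not required.
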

\begin{corollary}\label{cor:weighted_gain}
If $\lmin\!\bigl(W^{-1/2}\bar H W^{-1/2}\bigr)>0$ for any weight matrix $W\succ0$ under Assumption~\ref{ass:reg}, then $\|\hat{x}_\Delta\|_W\le\mu_W\|g_\Delta\|_{W^{-1}}$ holds, where $\mu_W:=1/\lmin\!\bigl(W^{-1/2}\bar H W^{-1/2}\bigr)$.
\end{corollary}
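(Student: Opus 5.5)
The plan is to deduce the corollary from the path identity \eqref{eq:path_identity} of Theorem~\ref{thm:path_gain}. That identity holds under Assumption~\ref{ass:reg} alone; the weighted claim only needs a change of variables.

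Fix $W\succ0$. Let $W^{1/2}$ be its symmetric positive definite square root and set $\tilde H:=W^{-1/2}\bar H W^{-1/2}$. Since $\bar H$ is an integral of Hessians of a $C^2$ function, it is symmetric, and therefore $\tilde H$ is symmetric as well. Define $z:=W^{1/2}\hat x_\Delta$ and $\tilde g:=W^{-1/2}g_\Delta$. Multiplying \eqref{eq:path_identity} on the left by $W^{-1/2}$ and inserting $W^{-1/2}W^{1/2}=I$ gives
\begin{equation*}
\tilde H z = W^{-1/2}\bar H W^{-1/2}W^{1/2}\hat x_\Delta = -\tilde g .
\end{equation*}
The norms translate as $\|z\|=\|\hat x_\Delta\|_W$ and $\|\tilde g\|=\sqrt{g_\Delta^\top W^{-1}g_\Delta}=\|g_\Delta\|_{W^{-1}}$.

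Next, assume $\lmin(\tilde H)>0$. Then $\tilde H\succ0$ and $z^\top\tilde H z\ge \lmin(\tilde H)\|z\|^2$. Combining this with Cauchy--Schwarz gives
\begin{equation*}
\lmin(\tilde H)\|z\|^2\le z^\top\tilde H z=-z^\top\tilde g\le\|z\|\,\|\tilde g\| .
\end{equation*}
If $z=0$ the claim is trivial. Otherwise, dividing by $\|z\|$ yields $\|z\|\le \mu_W\|\tilde g\|$, which is exactly $\|\hat x_\Delta\|_W\le\mu_W\|g_\Delta\|_{W^{-1}}$. An equivalent route is to invert: $z=-\tilde H^{-1}\tilde g$ and $\|\tilde H^{-1}\|=1/\lmin(\tilde H)$.

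There is no real obstacle. The only points to check are the symmetry of $\bar H$, so that the Rayleigh-quotient bound with $\lmin$ is legitimate, and that the proof uses only the identity part of Theorem~\ref{thm:path_gain}, not its hypothesis $\lmin(\bar H)>0$. The latter is in fact implied: by congruence, $\bar H=W^{1/2}\tilde H W^{1/2}$ is positive definite whenever $\tilde H$ is.
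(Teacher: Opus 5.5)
Your proof is correct and follows the paper's route: the same change of variables $\tilde H=W^{-1/2}\bar H W^{-1/2}$ reduces the weighted bound to the unweighted one. The paper concludes via $\|\tilde H^{-1}\|=1/\lmin(\tilde H)$ applied to $\hat x_\Delta=-\bar H^{-1}g_\Delta$, which is your ``equivalent route,'' while your Rayleigh-quotient version and congruence remark make explicit the invertibility of $\bar H$ that the paper uses tacitly.
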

Theorem~\ref{thm:path_gain} separates the propagation mechanism into a
model-side perturbation $g_\Delta$ and an estimator-side curvature
margin $\lmin(\bar H)$.
The pathwise identity itself is a standard sensitivity
argument in parametric optimization~\cite{Bonnans00}.
The gain bound \eqref{eq:path_bound}, however, holds regardless of the direction of $g_\Delta$. The fragility of each sensor configuration to the system model mismatch is quantified by a single scalar, the information-curvature margin $\mu^{-1}=\lmin(\bar H)$. The rest of the paper makes this margin computable and exploits it for sensor design, using only standard arguments.
The weighted bound applies when state directions have different physical importance, and $\mu_W$ ranks worst-case weighted amplification, not the realized shift.

Calculating $\bar H$ directly is generally difficult. 
To analyze the computable case, we introduce the pathwise gain
\begin{equation}
\mu_{\mathrm{path}}:=
\Bigl(\min_{s\in[0,1]}
\lmin\!\bigl(
H_{f^0}\bigl(\hat x_0(f^\star)+s\hat{x}_\Delta;Y,U\bigr)
\bigr)\Bigr)^{-1}\!.
\label{eq:mu_path}
\end{equation}
Since $\bar H\succeq \mu_{\mathrm{path}}^{-1}I$,
Theorem~\ref{thm:path_gain} also yields
$\|\hat{x}_\Delta\|\le \mu_{\mathrm{path}}\|g_\Delta\|$
whenever $\mu_{\mathrm{path}}>0$.
The quantity $\mu_{\mathrm{path}}$ still depends on the unknown oracle
path.
Under Assumption~\ref{ass:gap_lip}, which holds for class $C^2$ functions on the compact $\mathcal{X}$~\cite{Khalil02}, Theorem~\ref{thm:mm_to_map} gives a structural mismatch-to-shift bound.

\begin{assumption}\label{ass:gap_lip}
The functions $f^0$, $f^\star$, and $h$ are uniformly differentiable and Lipschitz over the relevant compact sets.
There exist constants $\varepsilon, \varepsilon_J\ge 0$ such that the system model mismatch and its Jacobian satisfy
\begin{align*}
\sup_{x\in\mathcal X,u\in\mathcal U} \|f_\Delta(x,u)\|\le\varepsilon,
\ \sup_{x\in\mathcal X,u\in\mathcal U} \|\partial_xf_\Delta(x,u)\|\le\varepsilon_J.
\end{align*}
\end{assumption}
\begin{theorem}\label{thm:mm_to_map}
Under Assumptions~\ref{ass:reg} and~\ref{ass:gap_lip}, if
$\mu_{\mathrm{path}}>0$, there exist constants
$c_1(U),c_2(U),c_3(U)\ge0$, independent of
$\varepsilon$ and $\varepsilon_J$ and depending only on $N$, $U$, $R$, and the uniform derivative/Lipschitz bounds in Assumption~\ref{ass:gap_lip}, such that
\begin{align}
\|\hat{x}_\Delta\|
&\le\mu_{\mathrm{path}}
\Bigl(
c_1(U)\varepsilon + c_2(U)\varepsilon_J
\notag\\
&+ c_3(U)(\varepsilon{+}\varepsilon_J) \max_k\|R^{-1/2}r_{\star,k}(\hat x_0(f^\star))\| \Bigr),
\label{eq:mm_to_map}
\end{align}
where $r_{\star,k}(x_0):=r_{f^\star,k}(x_0;Y,U) = y_k-h(\phi^{f^\star,U}_k(x_0))$.
\end{theorem}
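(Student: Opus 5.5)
The bound follows by combining the pathwise gain $\|\hat x_\Delta\|\le\mu_{\mathrm{path}}\|g_\Delta\|$, which Theorem~\ref{thm:path_gain} gives since $\bar H\succeq\mu_{\mathrm{path}}^{-1}I$, with an estimate of the injection $g_\Delta$ in terms of $\varepsilon$, $\varepsilon_J$, and the oracle residuals. Write $\hat x^\star:=\hat x_0(f^\star)$. Since $\hat x^\star$ is an interior local minimizer of $\Phi_{f^\star}$ (Assumption~\ref{ass:reg}), we have $\nabla\Phi_{f^\star}(\hat x^\star)=0$. Hence
\[
g_\Delta=\nabla\Phi_{f^0}(\hat x^\star)-\nabla\Phi_{f^\star}(\hat x^\star).
\]
The prior terms cancel, and with $J_{f,k}(x_0):=\partial_{x_0}\phi^{f,U}_k(x_0)$ we get
\[
g_\Delta=-\sum_{k}\Bigl[J_{f^0,k}^\top\nabla h(\phi^{0}_k)^\top R^{-1}r_{f^0,k}-J_{f^\star,k}^\top\nabla h(\phi^{\star}_k)^\top R^{-1}r_{\star,k}\Bigr],
\]
where all quantities are evaluated at $\hat x^\star$.

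I would add and subtract mixed terms to split each summand into three differences. The first is the residual difference $r_{f^0,k}-r_{\star,k}=h(\phi^\star_k)-h(\phi^0_k)$, paired with the nominal sensitivities. The second is the sensor Jacobian difference $\nabla h(\phi^0_k)-\nabla h(\phi^\star_k)$, multiplied by $r_{\star,k}$. The third is the flow Jacobian difference $J_{f^0,k}-J_{f^\star,k}$, also multiplied by $r_{\star,k}$. Terms carrying the oracle residual are written as $R^{-1/2}\cdot R^{-1/2}r_{\star,k}$ and bounded by $\|R^{-1/2}\|\max_k\|R^{-1/2}r_{\star,k}\|$, which produces the $c_3$ term. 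The remaining term produces the $c_1$ part. Its constants come from the uniform bounds $L_h\ge\|\nabla h\|$, $L_{h'}$ (the Lipschitz constant of $\nabla h$), $L_f\ge\|\partial_x f^0\|$, and $L_{f'}$ (the Lipschitz constant of $\partial_x f^\star$) on the compact sets, together with $\|R^{-1}\|$.

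The two required trajectory lemmas are proved by discrete Grönwall induction. For the state gap, $e_k:=\phi^0_k-\phi^\star_k$ satisfies
\[
e_{k+1}=f^0(\phi^0_k,u_k)-f^0(\phi^\star_k,u_k)+f_\Delta(\phi^\star_k,u_k),
\]
so $\|e_{k+1}\|\le L_f\|e_k\|+\varepsilon$ and $\|e_k\|\le\varepsilon\sum_{j<k}L_f^j$. For the Jacobian gap, $J_{f,k+1}=\partial_xf(\phi_k,u_k)J_{f,k}$ gives
\[
J_{f^0,k+1}-J_{f^\star,k+1}=\partial_xf^0(\phi^0_k)\,(J_{f^0,k}-J_{f^\star,k})+\bigl[\partial_xf^0(\phi^0_k)-\partial_xf^\star(\phi^\star_k)\bigr]J_{f^\star,k}.
\]
The bracket is bounded by $\varepsilon_J+L_{f'}\|e_k\|$, so the Jacobian gap is at most $a_k\varepsilon_J+b_k\varepsilon$. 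Both coefficients are polynomial in $L_f^N$, and $\|J_{f^\star,k}\|\le L_{f^\star}^k$ bounds the factor $J_{f^\star,k}$.

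Inserting these estimates, the residual-difference term is at most $\sum_k L_f^k L_h\|R^{-1}\|L_h\|e_k\|$, which is $c_1\varepsilon$. The sensor Jacobian term gives $L_{h'}\|e_k\|$ times the residual, contributing to $c_3\varepsilon$. The flow Jacobian term gives $(a_k\varepsilon_J+b_k\varepsilon)$ times the residual, contributing to $c_3(\varepsilon+\varepsilon_J)$. In this splitting $c_2$ can be taken as $0$. Alternatively, one can pair the residual-difference term with $J_{f^\star,k}$ instead of $J_{f^0,k}$; the cross term is then bounded without the residual, because $r_{f^0,k}$ is itself bounded on the compact set, and this produces a genuine $c_2\varepsilon_J$ term. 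Either choice is consistent with the statement, since only the existence of nonnegative constants is claimed.

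The main obstacle is bookkeeping. All trajectories must stay in the compact set $\mathcal X$ so that the uniform constants apply, and the problem formulation assumes exactly this for relevant initial conditions. The constants must also be checked to be independent of $\varepsilon$ and $\varepsilon_J$. Nonlinearity in $\varepsilon$, for example $\varepsilon^2$ cross terms, must be avoided by always linearizing against one fixed trajectory rather than multiplying two gap bounds together.
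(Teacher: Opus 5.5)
Your proposal is correct and follows the paper's route. The paper likewise rewrites $g_\Delta=\nabla\Phi_{f^0}(\hat x_0(f^\star))-\nabla\Phi_{f^\star}(\hat x_0(f^\star))$, bounds it via Proposition~\ref{prop:gap_to_inj} (discrete Gr\"onwall for the state gap and then the flow-Jacobian gap, followed by a residual/sensitivity split), and combines the result with $\lmin(\bar H)\ge\mu_{\mathrm{path}}^{-1}$ in Theorem~\ref{thm:path_gain}; your explicit three-term splitting is valid and even shows that one may take $c_2=0$. One caveat concerns your side remark only: the alternative pairing bounds the cross term by $\|r_{f^0,k}\|$, which depends on the data $y_k$ and is not controlled by compactness of $\mathcal X$. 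The resulting $c_2$ would then depend on $Y$, contrary to the statement's dependence on $N$, $U$, $R$ and the uniform bounds only, so keep your main splitting.
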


From Theorems~\ref{thm:path_gain} and \ref{thm:mm_to_map}, the system model mismatch propagates to the MAP shift through the chain $f_\Delta\to g_\Delta\to\hat{x}_\Delta$.
The remaining terms collect how the mismatch propagates through the dynamics and observation map.
Unlike the injection $\|g_\Delta\|$ in Theorem~\ref{thm:path_gain}, the constants $(\varepsilon,\varepsilon_J)$ in Assumption~\ref{ass:gap_lip} quantify the system model mismatch itself. The MAP shift scales linearly with $(\varepsilon,\varepsilon_J)$, and the sensor-dependent gain $\mu_{\mathrm{path}}$ sets the amplification factor.
The constants $c_i(U)$ can grow conservatively with the horizon, so the bound should be read as a structural propagation result rather than a numerically sharp certificate.
The residual factor in \eqref{eq:mm_to_map} is noise-level, not mismatch-level, since the oracle innovation is measurement noise up to estimation error.

To obtain computable proxies, we define $J_{f,k}(x_0):=\partial \phi^{f,U}_k(x_0)/\partial x_0$, $C_{f,k}(x_0):=(\partial h/\partial x)|_{x=\phi^{f,U}_k(x_0)}$, and the Gauss--Newton approximation of the posterior Hessian $G_{f,N}(x_0,U):=\Sigma_x^{-1}+\sum_{k=0}^{N-1}J_{f,k}^\top C_{f,k}^\top R^{-1}C_{f,k}J_{f,k}$.
Then, we obtain Theorem~\ref{thm:hess_decomp}.
\begin{theorem}\label{thm:hess_decomp}
For any $x_0\in\mathcal D$,
\begin{equation}
H_f(x_0;Y,U)
=
G_{f,N}(x_0,U)+S_f(x_0;Y,U),
\label{eq:hess_decomp}
\end{equation}
where $S_f(x_0;Y,U)$ collects residual-weighted second-derivative
terms.
If $\|S_f(x_0;Y,U)\|\le \delta_f(x_0;Y,U)$, then
\begin{equation*}
\lmin\bigl(H_f(x_0;Y,U)\bigr)
\ge
\lmin\bigl(G_{f,N}(x_0,U)\bigr)-\delta_f(x_0;Y,U).
\end{equation*}
\end{theorem}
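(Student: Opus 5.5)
We would prove Theorem~\ref{thm:hess_decomp} by differentiating $\Phi_f$ twice in $x_0$ via the chain rule, and then applying Weyl's inequality to the resulting decomposition.

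\emph{First derivative.} The prior term $\frac12\|x_0-\mu_0\|^2_{\Sigma_x^{-1}}$ has Hessian $\Sigma_x^{-1}$. For each $k$, write $r_k:=r_{f,k}(x_0;Y,U)$. Then $\partial r_k/\partial x_0=-C_{f,k}J_{f,k}$ by the chain rule applied to $h\circ\phi^{f,U}_k$, where $J_{f,k}$ exists since $f\in C^2$ and the flow is a finite composition of $C^2$ maps. Hence
\begin{equation*}
\nabla_{x_0}\tfrac12\|r_k\|^2_{R^{-1}}=-J_{f,k}^\top C_{f,k}^\top R^{-1}r_k .
\end{equation*}

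\emph{Second derivative.} Differentiating once more, the product rule gives three contributions. Differentiating $r_k$ yields $J_{f,k}^\top C_{f,k}^\top R^{-1}C_{f,k}J_{f,k}$; summed over $k$ and added to $\Sigma_x^{-1}$, this is exactly $G_{f,N}(x_0,U)$. The remaining two contributions come from differentiating $C_{f,k}$ and $J_{f,k}$, and each is linear in the weighted residual $w_k:=R^{-1}r_k$. In components they read
\begin{equation*}
-\sum_{i}(w_k)_i\Bigl[J_{f,k}^\top\nabla^2 h_i(\phi^{f,U}_k(x_0))J_{f,k}+\sum_{j}(C_{f,k}^\top w_k)_j\nabla^2_{x_0}(\phi^{f,U}_k)_j(x_0)\Bigr],
\end{equation*}
with the second-order flow sensitivities $\nabla^2_{x_0}\phi_k$ obtained from the recursion $\phi_{k+1}=f(\phi_k,u_k)$. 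We define $S_f$ as the sum over $k$ of these terms. It is symmetric because both $H_f$ and $G_{f,N}$ are symmetric. This establishes \eqref{eq:hess_decomp}. The twice differentiability needed here holds on all of $\R^n\supset\mathcal D$ by the standing $C^2$ assumptions on $f$ and $h$, so no local assumption is required.

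\emph{Eigenvalue bound.} Given $\|S_f\|\le\delta_f$, we apply Weyl's inequality for symmetric matrices,
\begin{equation*}
\lmin(G+S)\ge\lmin(G)+\lmin(S)\ge\lmin(G)-\|S\|.
\end{equation*}
Equivalently, for any unit vector $z$, $z^\top H_f z=z^\top G z+z^\top S z\ge\lmin(G)-\delta_f$; minimizing over $z$ gives the claim.

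\emph{Main obstacle.} The only delicate point is the bookkeeping of the second-order flow term. One might want $S_f$ written explicitly via a recursion for $\partial^2\phi_k/\partial x_0^2$, obtained by differentiating $J_{f,k+1}=\partial_x f\,J_{f,k}$. The statement, however, only requires that $S_f$ be residual-weighted, so we would avoid that recursion. Instead we present $S_f$ as $\sum_k$ of residual-weighted contractions of second derivatives of $h\circ\phi^{f,U}_k$, and verify linearity in $r_k$ directly from the product rule. With that framing, the eigenvalue inequality is immediate.
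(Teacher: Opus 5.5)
Your proposal is correct and follows the paper's proof. As there, the chain rule $\partial r_{f,k}/\partial x_0=-C_{f,k}J_{f,k}$ isolates the Gauss--Newton term, the residual-weighted remainder defines $S_f$, and Weyl's inequality for the symmetric pair $(G_{f,N},S_f)$ gives the eigenvalue bound.

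One bookkeeping slip needs fixing. In your component display, the flow term $\sum_j (C_{f,k}^\top w_k)_j\nabla^2_{x_0}(\phi^{f,U}_k)_j$ sits inside the bracket multiplied by $(w_k)_i$, so it becomes quadratic in $w_k$. That contradicts your own linearity claim, and if $S_f$ were defined literally this way, \eqref{eq:hess_decomp} would fail. The term should stand outside the $\sum_i$; equivalently, replace $(C_{f,k}^\top w_k)_j$ by $(C_{f,k})_{ij}$ inside the bracket. Then the $k$-th summand of $S_f$ is exactly $-\sum_i (w_k)_i\nabla^2_{x_0}\bigl(h_i\circ\phi^{f,U}_k\bigr)(x_0)$, which is what your closing paragraph states.
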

The curvature originates in the trajectory information accumulated
through $C_{f,k}J_{f,k}$.
Under full-state observation, the sensor map does not discard any
state direction.
Under partial observation, however, weakly sensed directions can
survive through the flow while contributing little to the posterior
curvature.
The curvature margin $\lmin(G_{f,N})$ is therefore directly
controlled by the sensor map $C_{f,k}$.
The relation between Hessian $H_f$ and BFIM is shown in Theorem~\ref{thm:bfim}.

\begin{theorem}\label{thm:bfim}
For any given $U$ and any $x_0\in\mathcal D$,
\begin{align}
&\E\!\left[ H_{f^\star}(x_0;Y,U)\mid x_0,U \right]=
G_{f^\star,N}(x_0,U), \label{eq:thm4_01} \\
&\E\!\left[H_{f^\star}(x_0;Y,U)\right]
=
\E\!\left[G_{f^\star,N}(x_0,U)\right]
=
G_N^o(U), \label{eq:thm4_02}
\end{align}
where the expectation in~\eqref{eq:thm4_02} is taken over the joint law of $(x_0,Y)$ induced by $f^\star$ and $U$.
\end{theorem}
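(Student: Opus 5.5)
We would start from the Hessian decomposition of Theorem~\ref{thm:hess_decomp} applied with $f=f^\star$, written out explicitly. Writing $\phi_k:=\phi^{f^\star,U}_k(x_0)$, $J_k:=J_{f^\star,k}(x_0)$ and $C_k:=C_{f^\star,k}(x_0)$, differentiate $\Phi_{f^\star}$ twice by the chain rule. The gradient is
\[
\nabla\Phi_{f^\star}=\Sigma_x^{-1}(x_0-\mu_0)-\sum_k J_k^\top C_k^\top R^{-1}r_{\star,k}(x_0).
\]
Differentiating once more gives $H_{f^\star}=G_{f^\star,N}+S_{f^\star}$. The residual term is
\[
S_{f^\star}(x_0;Y,U)=-\sum_{k=0}^{N-1}\sum_{i=1}^{p}\bigl[R^{-1}r_{\star,k}(x_0)\bigr]_i\,\nabla^2_{x_0}\bigl(h_i\circ\phi^{f^\star,U}_k\bigr)(x_0).
\]
The key structural fact is that $S_{f^\star}$ is linear in the innovations $r_{\star,k}$. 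Its coefficients depend only on $x_0$ and $U$, through second derivatives of the composed map $h\circ\phi_k$; these are well defined because $f^\star$ and $h$ are $C^2$. The matrix $G_{f^\star,N}$ is deterministic given $(x_0,U)$.

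\textbf{Conditional identity \eqref{eq:thm4_01}.} Under the law induced by $f^\star$, conditional on $x_0$ and $U$ the data satisfy $y_k=h(\phi^{f^\star,U}_k(x_0))+v_k$. Hence $r_{\star,k}(x_0)=v_k$, with $\E[v_k\mid x_0,U]=0$ because the noise is zero mean and independent of $x_0$. Taking the conditional expectation, $S_{f^\star}$ vanishes by linearity and $G_{f^\star,N}$ passes through unchanged, which gives \eqref{eq:thm4_01}.

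\textbf{Unconditional identity \eqref{eq:thm4_02}.} The first equality follows from the tower property applied to \eqref{eq:thm4_01}. For the second equality, observe that the joint density factorizes as
\[
p_\star(x_0,Y\mid U)=\mathcal N(x_0;\mu_0,\Sigma_x)\prod_k\mathcal N\bigl(y_k;h(\phi^{f^\star,U}_k(x_0)),R\bigr).
\]
Therefore $-\log p_\star(x_0,Y\mid U)=\Phi_{f^\star}(x_0;Y,U)+\text{const}$, where the constant depends on neither $x_0$ nor $Y$. It follows that $-\partial^2\log p_\star/\partial x_0^2=H_{f^\star}(x_0;Y,U)$ pointwise, and taking expectations over $(x_0,Y)$ gives $G_N^o(U)=\E[H_{f^\star}]$ directly from Definition~\ref{def:info}.

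\textbf{Main obstacle.} The delicate step is technical: justifying these expectations, in particular integrability and the domain of $x_0$. The statement is pointwise for $x_0\in\mathcal D$, but the BFIM expectation is over the Gaussian prior, which is supported on all of $\R^n$. We would handle this as follows. Since $f^\star$ and $h$ are $C^2$ on $\R^n$, the Hessian is defined everywhere. Conditional integrability follows from Gaussian moments of $v_k$, because $S_{f^\star}$ is affine in $v_k$. For the outer expectation we would assume, as is implicit in defining $G_N^o(U)$, that $\E\|G_{f^\star,N}(x_0,U)\|<\infty$, for instance via uniform derivative bounds in the spirit of Assumption~\ref{ass:gap_lip}. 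Under that assumption Fubini's theorem permits the iterated expectation.
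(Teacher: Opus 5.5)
Your proposal is correct and follows essentially the same route as the paper: apply the decomposition $H_{f^\star}=G_{f^\star,N}+S_{f^\star}$ of Theorem~\ref{thm:hess_decomp}, make the residual-weighted term vanish using $\E[r_{\star,k}(x_0)\mid x_0,U]=\E[v_k]=0$, take the outer expectation, and identify $H_{f^\star}$ with $-\partial^2\log p_\star/\partial x_0^2$ because $\Phi_{f^\star}$ and $-\log p_\star$ differ by a constant independent of $x_0$. Your explicit formula showing that $S_{f^\star}$ is linear in the innovations, and your integrability and domain remarks (the prior is supported on $\R^n$, not on $\mathcal D$), make precise points that the paper's proof leaves implicit, but they do not change the argument.
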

The BFIM $G_N^o(U)$ is the expected posterior curvature under the ideal dynamics, a design-time quantity for sensor and experiment design.
Unlike Theorems~\ref{thm:path_gain} and~\ref{thm:mm_to_map}, it is not a per-instance lower bound on the realized pathwise margin.

We next derive nominal proxies for the pathwise gain that avoid the oracle Hessian $\bar H$ and the oracle MAP point.
Define
\begin{align*}
\mu_{\rm nom}
&:= 1/\lmin\!\bigl(H_{f^0}(\hat x_0(f^0);Y,U)\bigr), \\
\mu_{\rm GN,nom}
&:= 1/\lmin\!\bigl(G_{f^0,N}(\hat x_0(f^0),U)\bigr).
\end{align*}
For a compact set $\mathcal B\subseteq\mathcal D$ containing the segment $[\hat x_0(f^\star),\,\hat x_0(f^0)]$, the modulus of continuity is defined by
\begin{equation*}
\omega_H(\eta):=
\!\!\!\! \sup_{x,x'\in\mathcal B, \|x-x'\|\le \eta} \!\!\!\!\!
\|H_{f^0}(x;Y,U)-H_{f^0}(x';Y,U)\|.
\end{equation*}
For $\mathcal B\subset\mathcal D$, we obtain Corollary~\ref{cor:proxy_bridge} on bounds of $\mu_{\mathrm{path}}^{-1}$.
\begin{corollary}\label{cor:proxy_bridge}
If $\mu_{\rm nom}^{-1}>0$ and $\mu_{\rm GN,nom}^{-1}>0$, then
\begin{equation}
\mu_{\rm nom}^{-1}-\omega_H(\|\hat{x}_\Delta\|)
\le
\mu_{\mathrm{path}}^{-1}
\le
\mu_{\rm nom}^{-1}.
\label{eq:proxy_bridge}
\end{equation}
Moreover, if
$\delta_{\rm nom}:=\|S_{f^0}(\hat x_0(f^0);Y,U)\|$, then
\begin{equation}
\mu_{\rm GN,nom}^{-1}
-\delta_{\rm nom}
-\omega_H(\|\hat{x}_\Delta\|) 
\le \mu_{\mathrm{path}}^{-1}.
\label{eq:proxy_bridge_GN}
\end{equation}
\end{corollary}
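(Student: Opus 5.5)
The plan is to compare the Hessian $H_{f^0}$ along the segment $x(s):=\hat x_0(f^\star)+s\hat x_\Delta$, $s\in[0,1]$, with its value at the nominal MAP point $x(1)=\hat x_0(f^0)$. The tools are Weyl's inequality for symmetric matrices, $|\lmin(A)-\lmin(B)|\le\|A-B\|$, and the modulus of continuity $\omega_H$. By definition \eqref{eq:mu_path}, $\mu_{\mathrm{path}}^{-1}=\min_{s\in[0,1]}\lmin(H_{f^0}(x(s);Y,U))$, so both sides of \eqref{eq:proxy_bridge} reduce to controlling this minimum.

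For the upper bound in \eqref{eq:proxy_bridge}, I take $s=1$ in the minimum. This gives $\mu_{\mathrm{path}}^{-1}\le\lmin(H_{f^0}(\hat x_0(f^0);Y,U))=\mu_{\rm nom}^{-1}$ immediately.

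For the lower bound, fix any $s\in[0,1]$. Both $x(s)$ and $x(1)$ lie in the segment, hence in $\mathcal B$, and
\[
\|x(s)-x(1)\|=(1-s)\|\hat x_\Delta\|\le\|\hat x_\Delta\| .
\]
The definition of $\omega_H$ then gives $\|H_{f^0}(x(s))-H_{f^0}(x(1))\|\le\omega_H(\|\hat x_\Delta\|)$. Here I use that $\omega_H$ is a supremum over pairs at distance at most $\eta$, so it is automatically nondecreasing in $\eta$. Weyl's inequality yields
\[
\lmin(H_{f^0}(x(s)))\ge\mu_{\rm nom}^{-1}-\omega_H(\|\hat x_\Delta\|).
\]
Taking the minimum over $s$ proves \eqref{eq:proxy_bridge}. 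Continuity of $H_{f^0}$ on the segment, from Assumption~\ref{ass:reg}, guarantees that the minimum in \eqref{eq:mu_path} is attained. The hypothesis $\mu_{\rm nom}^{-1}>0$ only ensures that $\mu_{\rm nom}$ is well defined.

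For \eqref{eq:proxy_bridge_GN}, I apply Theorem~\ref{thm:hess_decomp} at $x_0=\hat x_0(f^0)$ with $f=f^0$ and $\delta_f=\delta_{\rm nom}=\|S_{f^0}(\hat x_0(f^0);Y,U)\|$. This gives $\mu_{\rm nom}^{-1}\ge\mu_{\rm GN,nom}^{-1}-\delta_{\rm nom}$. Substituting into the left inequality of \eqref{eq:proxy_bridge} gives the claim. Theorem~\ref{thm:hess_decomp} requires $\hat x_0(f^0)\in\mathcal D$, which holds because the selected minimizers lie in $\mathrm{int}(\mathcal D)$.

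No step is a real obstacle; the argument is bookkeeping. The points to check are that $\omega_H$ is evaluated on $\mathcal B$, which contains the whole segment, and that the one-sided Weyl bound is used with the correct sign.
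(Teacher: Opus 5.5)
Your proposal is correct and follows the paper's own proof: the upper bound comes from taking $s=1$ in \eqref{eq:mu_path}, the lower bound from Weyl's inequality applied between $H_{f^0}(x(s);Y,U)$ and $H_{f^0}(\hat x_0(f^0);Y,U)$ together with $\|x(s)-\hat x_0(f^0)\|\le\|\hat x_\Delta\|$ and the definition of $\omega_H$, and \eqref{eq:proxy_bridge_GN} from Theorem~\ref{thm:hess_decomp} at $x_0=\hat x_0(f^0)$. One small simplification: you do not need monotonicity of $\omega_H$, because the pair $(x(s),\hat x_0(f^0))$ already lies within distance $\|\hat x_\Delta\|$ and is therefore covered by the supremum defining $\omega_H(\|\hat x_\Delta\|)$.
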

When the left-hand side of \eqref{eq:proxy_bridge_GN} is positive, \eqref{eq:proxy_bridge_GN} provides a lower bound on the margin $\mu_{\mathrm{path}}^{-1}$.
The lower bound in \eqref{eq:proxy_bridge_GN} contains the continuity term $\omega_H(\|\hat{x}_\Delta\|)$ and is explicit only when an external bound on $\|\hat{x}_\Delta\|$ is available.
Corollary~\ref{cor:certificate} removes the need for an external bound when $\omega_H$ is bounded by a linear function.
The linear bound follows from the definition of $\omega_H$ whenever the nominal Hessian $H_{f^0}(\cdot;Y,U)$ is Lipschitz on $\mathcal D$. Corollary~\ref{cor:proxy_bridge} allows any compact set $\mathcal B\subset\mathcal D$ containing the segment. Corollary~\ref{cor:certificate} takes $\mathcal B=\mathcal D$, so the constant $L_H$ does not depend on the unknown segment.

\begin{corollary}\label{cor:certificate}
Under Assumption~\ref{ass:reg}, we set $\mu_{\rm nom}^{-1}>0$, $\omega_H(\eta)\le L_H\eta$ for all $\eta\ge0$ with $\mathcal B=\mathcal D$ and a constant $L_H>0$.
Define $\varrho:=1/(\mu_{\rm nom}L_H)$ and suppose that the margin test $2L_H\mu_{\rm nom}^{2}\|g_\Delta\|\le 1$ holds.
Then $\|\hat{x}_\Delta\|\ge\varrho$ or
\begin{equation}
\|\hat{x}_\Delta\|\le 2\,\mu_{\rm nom}\|g_\Delta\|\le\varrho
\label{eq:certificate}
\end{equation}
holds.
If \eqref{eq:certificate} holds, then $\hat x_0(f^0)$ is the unique stationary point of $\Phi_{f^0}(\cdot;Y,U)$ on $\mathcal B_\varrho:=\{x\in\mathcal D:\|x-\hat x_0(f^0)\|<\varrho\}$.
\end{corollary}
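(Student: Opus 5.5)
Write $d:=\|\hat{x}_\Delta\|$ and $\lambda_{\rm nom}:=\mu_{\rm nom}^{-1}$. The plan is to combine the pathwise identity of Theorem~\ref{thm:path_gain} with a Lipschitz lower bound on $\lmin(\bar H)$. This gives a quadratic inequality in $d$, and its two admissible branches are exactly the dichotomy in the statement.

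\emph{Curvature lower bound.} Every point on the segment $\hat x_0(f^\star)+s\hat{x}_\Delta$ lies within distance $(1-s)d$ of $\hat x_0(f^0)$. Since $\mathcal D$ is convex and contains both endpoints, the segment lies in $\mathcal B=\mathcal D$. Weyl's inequality and $\omega_H(\eta)\le L_H\eta$ then give
\[
\lmin\bigl(H_{f^0}(\hat x_0(f^\star)+s\hat{x}_\Delta;Y,U)\bigr)\ge\lambda_{\rm nom}-L_H(1-s)d .
\]
Because $\lmin$ is concave, $\lmin(\bar H)\ge\int_0^1\lmin(H_{f^0}(\cdot))\,ds$. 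Integrating the bound over $s\in[0,1]$ yields $\lmin(\bar H)\ge\lambda_{\rm nom}-\tfrac12 L_H d$. A cruder alternative, $\lambda_{\rm nom}-L_Hd$, as in Corollary~\ref{cor:proxy_bridge}, also works; the factor $\tfrac12$ only gives some slack.

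\emph{Quadratic inequality and dichotomy.} Suppose $d<\varrho=\lambda_{\rm nom}/L_H$. Then $\lmin(\bar H)\ge\lambda_{\rm nom}-\tfrac12 L_Hd>\tfrac12\lambda_{\rm nom}>0$, so Theorem~\ref{thm:path_gain} applies and gives
\[
d\,(\lambda_{\rm nom}-\tfrac12 L_H d)\le\|g_\Delta\| .
\]
For $d<\varrho$ we have $\lambda_{\rm nom}-\tfrac12 L_Hd\ge\tfrac12\lambda_{\rm nom}$, hence $d\le 2\mu_{\rm nom}\|g_\Delta\|$. The margin test $2L_H\mu_{\rm nom}^2\|g_\Delta\|\le1$ is equivalent to $2\mu_{\rm nom}\|g_\Delta\|\le 1/(L_H\mu_{\rm nom})=\varrho$, which completes \eqref{eq:certificate}. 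Otherwise $d\ge\varrho$, which is the other branch. If the cruder bound is used, the condition $d<\varrho$ alone does not keep the curvature positive. In that case one restricts to $d\le\varrho/2$ and handles the gap separately. Avoiding this complication is why I would keep the $\tfrac12$ from integration.

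\emph{Uniqueness on $\mathcal B_\varrho$.} Take any $x\in\mathcal B_\varrho$. The segment from $\hat x_0(f^0)$ to $x$ lies in $\mathcal D$ by convexity, and along it $\lmin(H_{f^0})\ge\lambda_{\rm nom}-L_H\|x-\hat x_0(f^0)\|>0$. So $\Phi_{f^0}$ is strictly convex on the ball-intersection $\mathcal B_\varrho$, which is convex. Alternatively, the mean-value identity $\nabla\Phi_{f^0}(x)=\nabla\Phi_{f^0}(x)-\nabla\Phi_{f^0}(\hat x_0(f^0))=\tilde H(x-\hat x_0(f^0))$ with $\tilde H\succ0$ shows directly that $\nabla\Phi_{f^0}(x)\neq0$ for $x\ne\hat x_0(f^0)$.

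Since $\hat x_0(f^0)$ is an interior local minimizer by Assumption~\ref{ass:reg}, it is stationary, and by the above it is the unique stationary point on $\mathcal B_\varrho$. The main obstacle is bookkeeping rather than any deep step: $\mathcal B=\mathcal D$ must contain every segment used, which convexity of $\mathcal D$ ensures, and the strict versus non-strict inequality at the boundary $d=\varrho$ must be placed consistently with the stated dichotomy.
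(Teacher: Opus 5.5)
Your proof is correct and follows the paper's overall structure. Both arguments use the averaged Weyl bound $\lmin(\bar H)\ge\mu_{\rm nom}^{-1}-\tfrac12 L_H\|\hat{x}_\Delta\|$, then Theorem~\ref{thm:path_gain}, and then positive definiteness of $H_{f^0}$ on $\mathcal B_\varrho$ for uniqueness.

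The difference is in how you resolve $t(\mu_{\rm nom}^{-1}-\tfrac12 L_H t)\le\|g_\Delta\|$, where $t:=\|\hat{x}_\Delta\|$. The paper treats it as a quadratic inequality and uses the margin test as its discriminant condition. It then shows the larger root satisfies $t_+\ge\varrho>t$, and rationalizes the smaller root to get $t\le t_-\le 2\mu_{\rm nom}\|g_\Delta\|$. You instead observe that $t<\varrho$ already gives $\lmin(\bar H)>\tfrac12\mu_{\rm nom}^{-1}$, and you read off $t\le 2\mu_{\rm nom}\|g_\Delta\|$ directly.

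Your route is shorter. It also makes clear that the margin test is used only for the final inequality $2\mu_{\rm nom}\|g_\Delta\|\le\varrho$, since the near-branch bound on $t$ holds whenever $t<\varrho$. The paper's route yields the sharper intermediate estimate $t\le t_-=2\|g_\Delta\|/\bigl(\mu_{\rm nom}^{-1}+\sqrt{\mu_{\rm nom}^{-2}-2L_H\|g_\Delta\|}\bigr)$. It also explains where the factor $2$ in the margin test comes from.

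One correction to your aside on the cruder bound $\mu_{\rm nom}^{-1}-L_H t$: it does not ``also work,'' and the factor $\tfrac12$ from integration is not mere slack.
\begin{itemize}
\item The cruder bound stays positive for every $t<\varrho$, so positivity of the curvature is not the issue.
\item The real problem is that it gives the factor-$2$ estimate only for $t\le\varrho/2$.
\item Its quadratic $L_H t^2-\mu_{\rm nom}^{-1}t+\|g_\Delta\|\ge0$ needs the stronger test $4L_H\mu_{\rm nom}^2\|g_\Delta\|\le1$ to have real roots.
\item Even when the roots are real, the larger one lies only in $[\varrho/2,\varrho]$, so the interval $(\varrho/2,\varrho)$ cannot be excluded under the stated margin test.
\end{itemize}
The factor $\tfrac12$ is therefore essential to the stated dichotomy, and your main argument correctly relies on it.
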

Under the margin test, the shift $\|\hat{x}_\Delta\|$ never lies in the intermediate interval $(2\mu_{\rm nom}\|g_\Delta\|,\,\varrho)$.
The choice $\mathcal B=\mathcal D$ also makes the Lipschitz bound valid on the whole ball $\mathcal B_\varrho$, which the uniqueness statement in Corollary~\ref{cor:certificate} requires.
Every quantity in the margin test except $\|g_\Delta\|$ is computable from $(Y,U)$ and the nominal model, and Proposition~\ref{prop:gap_to_inj} bounds $\|g_\Delta\|$ under Assumption~\ref{ass:gap_lip}, so the margin test yields the near-branch bound under a prescribed mismatch level.
The constant $L_H$ depends only on the nominal model and $(Y,U)$, and any Lipschitz overestimate of the nominal Hessian over $\mathcal D$ is valid. Overestimating $L_H$ tightens the margin test and shrinks $\varrho$ but never invalidates \eqref{eq:certificate}.
In the near branch, the certificate also excludes basin ambiguity, since no other stationary point exists on $\mathcal B_\varrho$.
The far branch $\|\hat{x}_\Delta\|\ge\varrho$ remains possible, so the certificate is local.

%%%%%%%%%%%%%%%%%%%%%%%%%%%%%%%%%%%%%%%%%%%%%%%%%%%%%
%%%%%%%%%%%%%%%%%%%%%%%%%%%%%%%%%%%%%%%%%%%%%%%%%%%%%

\section{Special case: LTI systems}
\label{sec:lti}

Let us consider a linear time-invariant (LTI) system
\begin{align*}
    f^\star(x_k,u_k)&:=A^\star x_k+B^\star u_k, \\
    f^0(x_k,u_k)&:=A^0 x_k+B^0 u_k, \
    h(x_k):=Cx_k,
\end{align*}%
where $A^\star,A^0\in\R^{n\times n}$, $B^\star,B^0\in\R^{n\times m}$, and $C\in\R^{p\times n}$.
Then, the system model mismatch~\eqref{eq:mismatch} becomes
\begin{equation}
    f_\Delta(x,u)=(A^0{-}A^\star)x+(B^0{-}B^\star)u.   
\end{equation}
From Theorem~\ref{thm:path_gain}, we straightforwardly obtain Proposition~\ref{prop:lti_margin}.
\begin{proposition}\label{prop:lti_margin}
    For the LTI systems, the MAP objective
    $\Phi_{f^0}(x_0;Y,U)$ is quadratic in $x_0$ with constant Hessian
        $H^0=\Sigma_x^{-1}+\mathcal W_o^0(N)$,
        $\mathcal W_o^0(N):=\sum_{k=0}^{N-1}\bigl((A^0)^k\bigr)^\top C^\top R^{-1}C(A^0)^k$.
    Hence, the nominal--oracle MAP shift satisfies
    \begin{equation}
        \hat x_\Delta = -(H^0)^{-1}g_\Delta,\qquad
        \|\hat x_\Delta\|\le \mu_{\mathrm{LTI}}\|g_\Delta\|,
        \label{eq:lti_shift_bound}
    \end{equation}
    where $\mu_{\mathrm{LTI}}:=1/\lmin(H^0)$.
\end{proposition}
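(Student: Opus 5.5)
The plan is to first compute the nominal flow explicitly, then read off the Hessian, and finally specialize Theorem~\ref{thm:path_gain}.

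\textbf{Flow and objective.} For $f^0(x,u)=A^0x+B^0u$, induction on $k$ gives
\[
\phi^{f^0,U}_k(x_0)=(A^0)^kx_0+d_k(U),\qquad d_k(U):=\sum_{j=0}^{k-1}(A^0)^{k-1-j}B^0u_j .
\]
The flow is therefore affine in $x_0$ with Jacobian $J_{f^0,k}=(A^0)^k$, independent of $x_0$. Since $h(x)=Cx$, the innovation is
\[
r_{f^0,k}(x_0)=y_k-Cd_k(U)-C(A^0)^kx_0,
\]
which is affine in $x_0$. Substituting into \eqref{eq:Phi}, $\Phi_{f^0}$ is a sum of squared weighted norms of affine functions, hence quadratic in $x_0$.

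\textbf{Constant Hessian.} Differentiating twice gives
\[
H^0=\Sigma_x^{-1}+\sum_{k=0}^{N-1}\bigl((A^0)^k\bigr)^\top C^\top R^{-1}C(A^0)^k=\Sigma_x^{-1}+\mathcal W_o^0(N).
\]
The same conclusion follows from Theorem~\ref{thm:hess_decomp}. There $C_{f^0,k}=C$ and $J_{f^0,k}=(A^0)^k$, and $S_{f^0}\equiv 0$ because the residual-weighted terms involve second derivatives of $h\circ\phi_k$, which vanish for an affine map. So $H_{f^0}=G_{f^0,N}=H^0$ for every $x_0$ and every $Y$. Since $\Sigma_x^{-1}\succ0$ and $\mathcal W_o^0(N)\succeq0$, we get $H^0\succ0$ and $\lmin(H^0)\ge\lmin(\Sigma_x^{-1})>0$.

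\textbf{Invoking Theorem~\ref{thm:path_gain}.} Assumption~\ref{ass:reg} holds for the chosen minimizers: $\Phi_{f^0}$ is a polynomial and hence $C^2$ everywhere, and the minimizer conditions are part of the standing setting. Because $H_{f^0}$ is constant along the segment, the path average \eqref{eq:Hbar} equals $H^0$. The identity \eqref{eq:path_identity} then reads $H^0\hat x_\Delta=-g_\Delta$, and invertibility of $H^0$ gives $\hat x_\Delta=-(H^0)^{-1}g_\Delta$. The bound follows from
\[
\|(H^0)^{-1}\|=\frac{1}{\lmin(H^0)}=\mu_{\mathrm{LTI}},
\]
which matches \eqref{eq:path_bound}.

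\textbf{Main obstacle.} There is little genuine difficulty. The only point needing care is that the stated objective is restricted to $\mathcal D$, so the minimizers must be interior for the first-order conditions to hold. This is exactly the standing assumption that the selected pair lies in $\mathrm{int}(\mathcal D)$. Under strict convexity the minimizers are also unique, so no basin-selection issue arises.
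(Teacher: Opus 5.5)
Your proposal is correct and takes the same route as the paper. The paper only remarks that the result follows "straightforwardly" from Theorem~\ref{thm:path_gain}: the nominal objective is quadratic with constant Hessian $H^0\succ0$, so $\bar H=H^0$. Your explicit flow computation, the check $S_{f^0}\equiv0$, and the identity $\|(H^0)^{-1}\|=1/\lmin(H^0)$ fill in exactly the details the paper leaves implicit.
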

From Proposition~\ref{prop:lti_margin}, the system model mismatch affects the MAP shift through the injection $g_\Delta$, whereas the sensor configuration affects the amplification only through the nominal curvature matrix $H^0$.
The oracle objective built from $(A^\star,B^\star)$ also has a constant Hessian, but the amplification in \eqref{eq:lti_shift_bound} is governed by the nominal curvature $H^0$, because the estimator is constructed from the nominal model.

The worst case is attained when $g_\Delta$ lies in the eigenspace of $\lmin(H^0)$, so \eqref{eq:lti_shift_bound} is tight and maximizing the margin $\mu_{\mathrm{LTI}}^{-1}$ is the exact minimax sensor design for a fixed injection budget $\|g_\Delta\|$.
To reduce the largest estimation variance under a correctly specified model, classical E-optimal design maximizes the smallest eigenvalue of an information matrix~\cite{Pronzato08}.
The margin justifies the same criterion by the tight worst-case bias bound under mismatch.

A similar argument to Corollary~\ref{cor:proxy_bridge} yields Proposition~\ref{prop:sensor_addition}.
\begin{proposition}\label{prop:sensor_addition}
    Consider the same setting as Proposition~\ref{prop:lti_margin} with diagonal $R$.
    Let $C_{\mathrm{part}}$ be obtained by removing rows of $C_{\mathrm{full}}$ and the corresponding entries of $R$.
    Then
        $H_{\mathrm{full}}=H_{\mathrm{part}}+\mathcal W_{o,\mathrm{lost}}(N)$, %\\
        $H_{\mathrm{full}}\succeq H_{\mathrm{part}}\succ 0$, %\
        $\mathcal W_{o,\mathrm{lost}}(N)\succeq 0$,
    where
    $\mathcal W_{o,\mathrm{lost}}(N):=\mathcal W_{o,\mathrm{full}}(N)-\mathcal W_{o,\mathrm{part}}(N)$.
    Therefore, the information-curvature margins
    satisfy
        $\mu_{\mathrm{full}}^{-1}\ge \mu_{\mathrm{part}}^{-1}$,
    where
    $\mu_{\bullet}:=1/\lmin(H_{\bullet})$.
    If additionally
    $\lmin(\mathcal W_{o,\mathrm{lost}}(N))>0$, then
    $\mu_{\mathrm{full}}^{-1}\ge \mu_{\mathrm{part}}^{-1}+\lmin\!\bigl(\mathcal W_{o,\mathrm{lost}}(N)\bigr)$.
\end{proposition}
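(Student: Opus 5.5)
The plan is to decompose the observability Gramian row by row, which works because $R$ is diagonal, and then finish with eigenvalue monotonicity and Weyl's inequality.

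\textbf{Row decomposition.} Write $R_{\mathrm{full}}=\diag(r_1,\ldots,r_{p})$ and let $c_i^\top$ denote the $i$-th row of $C_{\mathrm{full}}$. Then
\[
C_{\mathrm{full}}^\top R_{\mathrm{full}}^{-1}C_{\mathrm{full}}=\sum_{i=1}^{p} r_i^{-1}c_ic_i^\top .
\]
Let $\mathcal I\subseteq\{1,\ldots,p\}$ be the retained index set. Then
\[
C_{\mathrm{part}}^\top R_{\mathrm{part}}^{-1}C_{\mathrm{part}}=\sum_{i\in\mathcal I}r_i^{-1}c_ic_i^\top .
\]
The difference is $\sum_{i\notin\mathcal I}r_i^{-1}c_ic_i^\top\succeq 0$, since each term is a rank-one positive semidefinite matrix with $r_i>0$.

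\textbf{Gramian difference.} Conjugating by $(A^0)^k$ preserves semidefiniteness. Summing over $k\in\mathcal K$ in the definition of $\mathcal W_o^0(N)$ from Proposition~\ref{prop:lti_margin} therefore gives
\[
\mathcal W_{o,\mathrm{lost}}(N)=\sum_{k=0}^{N-1}\bigl((A^0)^k\bigr)^\top\Bigl(\sum_{i\notin\mathcal I}r_i^{-1}c_ic_i^\top\Bigr)(A^0)^k\succeq 0 .
\]
Both $H_{\mathrm{full}}$ and $H_{\mathrm{part}}$ contain the same prior term $\Sigma_x^{-1}$. Subtracting them gives $H_{\mathrm{full}}=H_{\mathrm{part}}+\mathcal W_{o,\mathrm{lost}}(N)$. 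Moreover, $H_{\mathrm{part}}\succeq\Sigma_x^{-1}\succ0$ because $\mathcal W_{o,\mathrm{part}}(N)\succeq 0$, and hence $H_{\mathrm{full}}\succeq H_{\mathrm{part}}\succ 0$.

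\textbf{Eigenvalue bounds.} I would use the Rayleigh quotient. For any unit vector $z$,
\[
z^\top H_{\mathrm{full}}z=z^\top H_{\mathrm{part}}z+z^\top\mathcal W_{o,\mathrm{lost}}z\ge\lmin(H_{\mathrm{part}})+\lmin(\mathcal W_{o,\mathrm{lost}}).
\]
Taking the minimum over $z$ yields Weyl's inequality $\lmin(A+B)\ge\lmin(A)+\lmin(B)$ for symmetric $A,B$.
\begin{itemize}
\item Since $\lmin(\mathcal W_{o,\mathrm{lost}})\ge 0$, this gives $\mu_{\mathrm{full}}^{-1}\ge\mu_{\mathrm{part}}^{-1}$, with both margins positive and hence well defined.
\item When $\lmin(\mathcal W_{o,\mathrm{lost}}(N))>0$, the same inequality gives the strict additive improvement claimed in the statement.
\end{itemize}

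\textbf{Main obstacle.} There is no real obstacle. The only point needing care is justifying the row-wise splitting of $C^\top R^{-1}C$, which is exactly where diagonality of $R$ is used. For a nondiagonal $R$, removing rows would correspond to a Schur-complement type comparison rather than a simple subtraction of terms.
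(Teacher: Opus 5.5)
Your proof is correct and takes essentially the paper's route. The paper gives no separate appendix proof; it only remarks that an argument similar to Corollary~\ref{cor:proxy_bridge} (Weyl's inequality applied to $H_{\mathrm{full}}=H_{\mathrm{part}}+\mathcal W_{o,\mathrm{lost}}(N)$ with the common prior term $\Sigma_x^{-1}$) yields the result, and your row-wise splitting of $C^\top R^{-1}C$ under diagonal $R$ and your Rayleigh-quotient bound $\lmin(A+B)\ge\lmin(A)+\lmin(B)$ supply exactly those omitted details.
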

The mismatch-amplification mechanism of LTI systems is explicit. 
The mismatch enters through $g_\Delta$, while partial observation and sensor placement determine the curvature-side resistance through $\mu_{\mathrm{LTI}}^{-1}$.
Adding sensors can only increase the margin and decrease the gain.

%%%%%%%%%%%%%%%%%%%%%%%%%%%%%%%%%%%%%%%%%%%%%%%%%%%%%
%%%%%%%%%%%%%%%%%%%%%%%%%%%%%%%%%%%%%%%%%%%%%%%%%%%%%
\section{Numerical validation}
\label{sec:exp}

\subsection{Validating pathwise gain bound and nominal proxy}

We first validate Theorem~\ref{thm:path_gain} and Corollary~\ref{cor:proxy_bridge} through a nonlinear damped pendulum.
The dynamics with state $x=[\theta,\dot\theta]^\top$ are $\ddot{\theta}=-(g/L)\sin\theta-c\,\dot\theta+u$. Only the angle is observed through $y=\theta+v$ with $v\sim\mathcal{N}(0,\sigma_v^2)$.
The system is discretized using the forward Euler method with a time step of $0.02$\,s over $N=80$ steps.
The ideal parameters are $(g/L,c)=(9.81,0.15)$ and the nominal model uses $(g_0/L_0,c_0)=(0.9\,g/L,\,1.3\,c)$.
For each of 120 Monte Carlo trials, an ideal initial state
$x_0\sim\mathcal{N}(0,\diag(0.6^2,1.2^2))$, control inputs
$u_k\stackrel{\mathrm{i.i.d.}}{\sim}\mathrm{Uniform}[-1,1]$, and noise $\sigma_v=0.05$ are generated.
Both MAP estimates $\hat x_0(f^\star)$ and $\hat x_0(f^0)$ are computed
by BFGS, with the nominal solve warm-started at the oracle optimizer
to track the local branch targeted by the theory.
The warm start does not affect the results. Nominal solves without the warm start converge to the same minimizers in all 120 trials.

The pathwise amplification gain $\mu_{\mathrm{path}}$ is approximated using 11 equally spaced points on the segment from
$\hat x_0(f^\star)$ to $\hat x_0(f^0)$.
All points in Fig.~\ref{fig:pendulum_bounds}(a) lie below the diagonal,
which empirically supports the gain bound
$\|\hat{x}_\Delta\|\le \mu_{\mathrm{path}}\|g_\Delta\|$.
The nominal Gauss--Newton gain proxy $\mu_{\rm GN,nom}$ also provides a
useful surrogate when the realized MAP shift is small, as shown in Fig.~\ref{fig:pendulum_bounds}(b).

\begin{figure}[t]
  \centering
  \includegraphics[clip,bb=14 9 261 107,width=\linewidth]{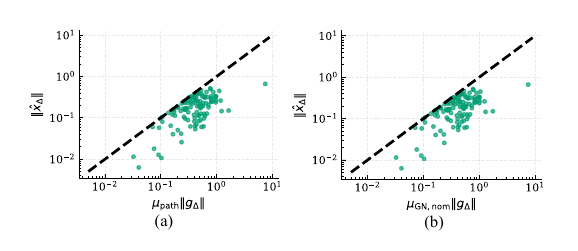}
  \caption{Relation with MAP shift $\|\hat{x}_\Delta\|$ in nonlinear pendulum. (a) Pathwise gain bound $\mu_{\mathrm{path}}\|g_\Delta\|$ and (b) Nominal GN gain proxy $\mu_{\rm GN,nom}\|g_\Delta\|$.}
  \label{fig:pendulum_bounds}
\end{figure}

\subsection{Curvature margin as a robustness predictor}

We next test whether the information-curvature margin in Section~\ref{sec:lti} predicts realized robustness to the system model mismatch. The experiments evaluate and rank given sensor placements. The gain bound covers every placement, and the search for a best placement is a separate design problem. The testbed is a mass--spring--damper (MSD) chain of $n_M=10$ masses with state $x=[q^\top,\dot{q}^\top]^\top\in\R^{2n_M}$ and dynamics
\begin{align}
  \dot{x} &= \begin{bmatrix} 0 & I_{n_M} \\
  -M^{-1}K & -M^{-1}D \end{bmatrix} x
  + \begin{bmatrix} 0 \\ M^{-1}B_u \end{bmatrix} u,
  \label{eq:chain_ct}\\
  y &= Cx + v,
  \quad v\sim\mathcal{N}(0,I\sigma_v^2), \notag
\end{align}
where $M=\diag(m_1,\dots,m_{n_M})$ is the mass matrix, $K,D\in\R^{n_M\times n_M}$ are the symmetric tridiagonal stiffness and damping matrices of the fixed-free topology, and $C$ selects $n_s$ of the $n_M$ position channels.
The system is discretized with a time step of $0.01$\,s.
The nominal parameters are $(m_0,c_0,k_0)=(1.0,0.4,20.0)$.
The ideal parameters are obtained by linear interpolation $(1-\beta)(m_0,c_0,k_0)+\beta(m_1,c_1,k_1)$ with $(m_1,c_1,k_1)=(1.4,0.7,40.0)$, so that $\beta=0$ gives no mismatch and $\beta=1$ gives the largest mismatch.
The prior is $x_0\sim\mathcal{N}(0,10I)$, the input is persistently exciting over $N=200$ steps, and $\sigma_v=0.05$.

\begin{figure}[t]
  \centering
  \includegraphics[clip,bb=14 8 264 106,width=\linewidth]{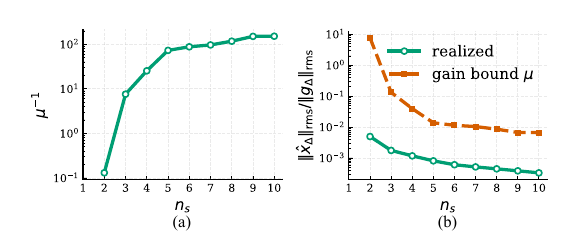}
  \caption{Relationship with the number of sensors $n_s$ in MSD system. (a) Information-curvature margin
$\mu_{\mathrm{LTI}}^{-1}$.
  (b) Realized amplification factor
  $\|\hat{x}_\Delta\|_{\mathrm{rms}}/\|g_\Delta\|_{\mathrm{rms}}$, together with the gain bound $\mu_{\mathrm{LTI}}$ (dashed).}
  \label{fig:sensor_count}
\end{figure}

\begin{figure}[t]
  \centering
  \includegraphics[clip,bb=11 15 331 126,width=\linewidth]{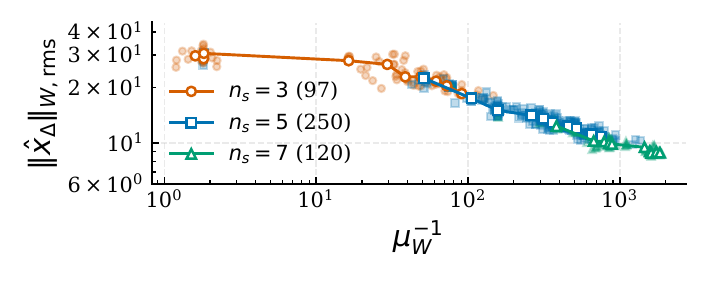}
  \caption{Weighted RMS shift $\|\hat{x}_\Delta\|_{W,\mathrm{rms}}$ versus weighted margin $\mu_W^{-1}$ for all observable placements with $n_s\in\{3,5,7\}$ at $\beta=1$.}

  \label{fig:sensor_ranking}
\end{figure}

The results in Fig.~\ref{fig:sensor_count} confirm Proposition~\ref{prop:sensor_addition}.
All placements share the same nominal/ideal-model pair, mismatch level $\beta=1$, and common Monte Carlo samples, ensuring that differences reflect only sensor choice.
The selected $n_s$ position sensors represent the margin-maximizing observable placement.
Shift and gradient norms are reported as root-mean-square (RMS) values over 80 trajectories per placement.
From Fig.~\ref{fig:sensor_count}(a), the margin $\mu_{\mathrm{LTI}}^{-1}$ increases monotonically with $n_s$, while the realized amplification factor $\|\hat{x}_\Delta\|_{\mathrm{rms}}/\|g_\Delta\|_{\mathrm{rms}}$ decreases with $n_s$ and remains below the gain bound $\mu_{\mathrm{LTI}}$ as shown in Fig.~\ref{fig:sensor_count}(b).

The second experiment tests whether the weighted margin $\mu_W^{-1}$
correlates with realized robustness across sensor placements within a
fixed budget.
For $n_s\in\{3,5,7\}$, all observable placements are enumerated and
evaluated on a common 50 Monte Carlo trajectories.
The weighting $W=\diag(I_{n_M},\omega_0^{-2}I_{n_M})$ with $\omega_0=\sqrt{k_0/m_0}$ nondimensionalizes the position--velocity state as in Corollary~\ref{cor:weighted_gain}.
From Fig.~\ref{fig:sensor_ranking}, larger $\mu_W^{-1}$ is strongly associated with smaller weighted shift, and the within-budget Spearman correlations are $\rho=-0.84,-0.94,-0.86$ for $n_s=3,5,7$.
Within these rankings, aggregate nominal information and mismatch robustness can disagree.
A trace-favoring placement, as in empirical-observability criteria~\cite{Qi16}, accumulates large nominal information but can leave a weak curvature direction.
A margin-favoring placement instead strengthens the weakest weighted direction, and its weighted shift is 1.9--2.8$\times$ smaller (Fig.~\ref{fig:placement_comparison}).
The margin quantifies the curvature-side resistance in the gain bound, not the mismatch itself.

\begin{figure}[t]
  \centering
  \includegraphics[clip,bb=5 6 396 249,width=\linewidth]{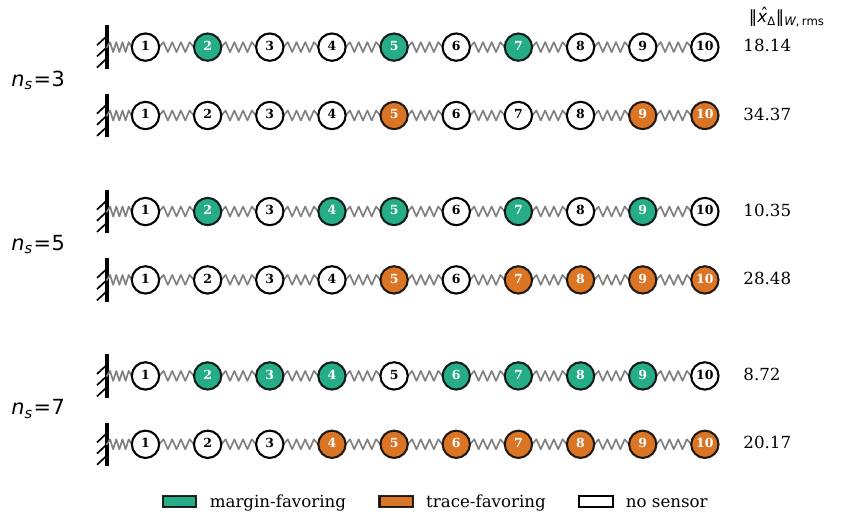}
  \caption{Trace-favoring and margin-favoring sensor placements on the
  10-mass chain for $n_s\in\{3,5,7\}$.
  Colored circles indicate sensed masses.}
  \label{fig:placement_comparison}
\end{figure}

%%%%%%%%%%%%%%%%%%%%%%%%%%%%%%%%%%%%%%%%%%%%%%%%%%%%%
%%%%%%%%%%%%%%%%%%%%%%%%%%%%%%%%%%%%%%%%%%%%%%%%%%%%%

\section{Conclusion}
\label{sec:conclusion}

This paper has theoretically shown how partial observation can amplify the impact of system model mismatch on MAP initial-state estimation via a sensor-dependent information-curvature margin.
The estimator-side sensitivity is determined by the weakest posterior curvature direction, not by aggregate information alone.
The information-curvature margin connects to nominal Gauss--Newton curvature and to the Bayesian Fisher information matrix, admits a computable nominal proxy in nonlinear systems, and becomes explicit in the LTI case.
The numerical experiments confirmed the pathwise gain bound, the nominal proxy, and the predicted sensor effects.
Future work will address process-noise models, sensor design under explicit model uncertainty, and margin-aware data-driven control under partial observation.

We acknowledge using OpenAI's ChatGPT and Anthropic's Claude to improve the manuscript readability.

\bibliographystyle{IEEEtran}
\bibliography{IEEEabrv,refs}

\appendix

%%%%%%%%%%%%%%%%%%%%%%%%%%%%%%%%%%%%%%%%%%%%%%%%%%%%%
%%%%%%%%%%%%%%%%%%%%%%%%%%%%%%%%%%%%%%%%%%%%%%%%%%%%%

\section{Proofs}
\label{sec:proof}

\noindent\textit{Proof of Theorem~\ref{thm:path_gain}.}
Define $x_s:=\hat x_0(f^\star)+s\hat{x}_\Delta$ for $s\in[0,1]$
and $F(s):=\nabla\Phi_{f^0}(x_s;Y,U)$.
By the chain rule, $F'(s)=H_{f^0}(x_s;Y,U)\,\hat{x}_\Delta$.
Integrating over $s\in[0,1]$ and using~\eqref{eq:Hbar} give
$F(1)-F(0)=\bar H\,\hat{x}_\Delta$.
By nominal optimality $F(1)=0$ and \eqref{eq:Delta_g}, $F(0)=g_\Delta$ and \eqref{eq:path_identity} hold.
If $\lmin(\bar H)>0$, $\hat{x}_\Delta=-\bar H^{-1}g_\Delta$ and \eqref{eq:path_bound} are obtained.
\hfill$\blacksquare$

\medskip
\noindent\textit{Proof of Corollary~\ref{cor:weighted_gain}.}
Let $\tilde H:=W^{-1/2}\bar H W^{-1/2}$ and $\tilde z:=W^{-1/2}z$.
Then $\|\bar H^{-1}z\|_W=\|\tilde H^{-1}\tilde z\|$ and
$\|z\|_{W^{-1}}=\|\tilde z\|$, so
$\sup_{z\neq0}\|\bar H^{-1}z\|_W/\|z\|_{W^{-1}}
=\|\tilde H^{-1}\|=1/\lmin(\tilde H)=\mu_W$
by the spectral-norm identity for positive definite matrices.
Applying this to $\hat{x}_\Delta=-\bar H^{-1}g_\Delta$
gives the weighted bound in Corollary~\ref{cor:weighted_gain}.
\hfill$\blacksquare$

\begin{proposition}\label{prop:gap_to_inj}
Under Assumption~\ref{ass:gap_lip}, for any $x_0\in\mathcal D$ there
exist constants $c_1(U),c_2(U),c_3(U)\ge0$, independent of
$\varepsilon$ and $\varepsilon_J$ and depending only on $N$, $U$, $R$, and the uniform derivative/Lipschitz bounds of $f^0$, $f^\star$, and $h$ such that
\begin{align}
&\|\nabla\Phi_{f^0}(x_0;Y,U)-\nabla\Phi_{f^\star}(x_0;Y,U)\| 
\le c_1(U)\varepsilon + c_2(U)\varepsilon_J
\notag\\
&\quad
+ c_3(U)(\varepsilon+\varepsilon_J)
\max_k\|R^{-1/2}r_{\star,k}(x_0)\|.
\label{eq:gap_to_inj}
\end{align}
\end{proposition}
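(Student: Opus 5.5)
We will write out the gradient of the MAP objective explicitly and compare it term by term under the two transition functions, evaluated at the same point $x_0$. With the notation of Theorem~\ref{thm:hess_decomp}, for either $f\in\{f^0,f^\star\}$ we have
\begin{equation*}
\nabla\Phi_f(x_0;Y,U)=\Sigma_x^{-1}(x_0-\mu_0)-\sum_{k=0}^{N-1}J_{f,k}^\top C_{f,k}^\top R^{-1}r_{f,k}(x_0;Y,U).
\end{equation*}
The prior term is common to both models and cancels in the difference. Each summand is a product of three factors: the flow Jacobian $J_{f,k}$, the sensor Jacobian $C_{f,k}$, and the weighted innovation $R^{-1}r_{f,k}$. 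We therefore split each summand telescopically as $J^0{}^\top C^0{}^\top R^{-1}r^0-J^\star{}^\top C^\star{}^\top R^{-1}r^\star$, which equals
\begin{equation*}
(J^0-J^\star)^\top C^0{}^\top R^{-1}r^0+J^\star{}^\top(C^0-C^\star)^\top R^{-1}r^0+J^\star{}^\top C^\star{}^\top R^{-1}(r^0-r^\star).
\end{equation*}
Whenever $r^0$ appears, we replace it by $r^\star+(r^0-r^\star)$. The residual-dependent pieces then produce the $c_3$ term, through $\|R^{-1}r^\star\|\le\|R^{-1/2}\|\,\|R^{-1/2}r^\star\|$, and the rest produce the $c_1,c_2$ terms.

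The key intermediate step is a discrete Gronwall-type bound on the flow and flow-Jacobian discrepancies along the horizon. Let $e_k:=\phi^{f^0,U}_k(x_0)-\phi^{f^\star,U}_k(x_0)$. Then
\begin{equation*}
e_{k+1}=f^0(\phi^0_k,u_k)-f^0(\phi^\star_k,u_k)+f_\Delta(\phi^\star_k,u_k),
\end{equation*}
so $\|e_{k+1}\|\le L_f\|e_k\|+\varepsilon$ with $e_0=0$. This gives $\|e_k\|\le a_k\varepsilon$, where $a_k=\sum_{j<k}L_f^j$. For the Jacobians,
\begin{equation*}
J^0_{k+1}-J^\star_{k+1}=\partial_xf^0(\phi^0_k)(J^0_k-J^\star_k)+\bigl[\partial_xf^0(\phi^0_k)-\partial_xf^0(\phi^\star_k)\bigr]J^\star_k+\partial_xf_\Delta(\phi^\star_k)J^\star_k.
\end{equation*}
Using Lipschitz continuity of $\partial_xf^0$ (constant $L_{f,x}$), the bound $\|J^\star_k\|\le L_f^k$, and Assumption~\ref{ass:gap_lip}, we get a recursion
\begin{equation*}
d_{k+1}\le L_fd_k+L_{f,x}L_f^k a_k\varepsilon+L_f^k\varepsilon_J,\qquad d_0=0.
\end{equation*}
Hence $\|J^0_k-J^\star_k\|\le b_k\varepsilon+b'_k\varepsilon_J$. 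In the same way, $\|C^0_k-C^\star_k\|\le L_{h,x}a_k\varepsilon$ and $\|r^0_k-r^\star_k\|=\|h(\phi^\star_k)-h(\phi^0_k)\|\le L_ha_k\varepsilon$. All of these constants depend only on $N$, $U$ (through the trajectories in $\mathcal X$) and the uniform derivative and Lipschitz bounds, as the statement requires. Compactness of $\mathcal X$ and $\mathcal U$ guarantees these bounds are finite.

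Substituting into the three-term split, each product is bounded by a discrepancy factor, which is linear in $(\varepsilon,\varepsilon_J)$, times bounded factors ($\|J\|\le L_f^k$, $\|C\|\le L_h$, $\|R^{-1}\|$) times either $\|r^\star_k\|$ or $\|r^0_k-r^\star_k\|$. Terms with $\|r^0_k-r^\star_k\|$ contribute to $c_1\varepsilon$. When the discrepancy $J^0-J^\star$ multiplies $r^0-r^\star$, the product is quadratic in $(\varepsilon,\varepsilon_J)$. We absorb it into linear terms via the a priori bound $\|r^0_k-r^\star_k\|\le 2\sup_{\mathcal X}\|h\|+\ldots$, or equivalently via the fact that $\varepsilon\le 2\sup\|f\|$ on the compact set. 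This absorption is the main point requiring care: the constants must stay independent of $\varepsilon$ while the bound remains linear.

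The residual terms are bounded by $\max_k\|R^{-1/2}r_{\star,k}(x_0)\|$ times a discrepancy linear in $(\varepsilon+\varepsilon_J)$, which gives the $c_3$ term. Summing over $k\in\mathcal K$ yields~\eqref{eq:gap_to_inj}. I expect the main obstacle to be the bookkeeping of the Jacobian recursion, which requires Lipschitz continuity of $\partial_xf^0$ and $\partial h$ on $\mathcal X$ (implied by the $C^2$ assumption on compact sets), together with keeping the cross terms linear.
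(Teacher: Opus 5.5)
Your proposal is correct and follows the paper's own route: a discrete Gr\"onwall bound on the trajectory difference, a second Gr\"onwall bound on the flow-Jacobian difference, and substitution into a residual/sensitivity split of the gradient difference. You give considerably more explicit bookkeeping than the paper, including the Lipschitz continuity of $\partial_x f^0$ and $\partial h$ that the paper uses implicitly. One simplification is available: split each summand as $J_k^{0\top}C_k^{0\top}R^{-1}(r^0_k-r^\star_k)+(J_k^{0\top}C_k^{0\top}-J_k^{\star\top}C_k^{\star\top})R^{-1}r^\star_k$ instead. This produces no quadratic cross terms at all, so your absorption step becomes unnecessary, along with its reliance on $\sup\|h\|$ rather than only Lipschitz constants. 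Note also that your three-term split has a second cross term you did not mention, from $(C^0-C^\star)$ multiplying $(r^0-r^\star)$.
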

\noindent\textit{Proof of Proposition~\ref{prop:gap_to_inj}.}
Applying the discrete Gr\"onwall inequality to the trajectory difference
$\delta x_k:=\phi_k^{f^\star,U}(x_0)-\phi_k^{f^0,U}(x_0)$,
which satisfies $\|\delta x_{k+1}\|\le L_f\|\delta x_k\|+\varepsilon$ with $\delta x_0=0$ and $L_f$  the Lipschitz constant of $f^\star$ in $x$, yields $\|\delta x_k\|=O(\varepsilon)$.
A second application to the flow-Jacobian difference $\delta J_k:=J_{f^\star,k}(x_0)-J_{f^0,k}(x_0)$, whose recursion additionally involves $\varepsilon_J$, yields $\|\delta J_k\|=O(\varepsilon)+O(\varepsilon_J)$.
Substituting these into the gradient difference and decomposing it into residual and sensitivity terms yields~\eqref{eq:gap_to_inj}.
\hfill$\blacksquare$

\medskip
\noindent\textit{Proof of Theorem~\ref{thm:mm_to_map}.}
Applying Proposition~\ref{prop:gap_to_inj} at
$x_0=\hat x_0(f^\star)$ bounds $\|g_\Delta\|$
by the right-hand side of~\eqref{eq:gap_to_inj}.
Since $\lmin(\bar H)\ge \mu_{\mathrm{path}}^{-1}$
by~\eqref{eq:mu_path},
Theorem~\ref{thm:path_gain} yields~\eqref{eq:mm_to_map}.
\hfill$\blacksquare$

\medskip
\noindent\textit{Proof of Theorem~\ref{thm:hess_decomp}.}
From~\eqref{eq:Phi},
$\Phi_f(x_0;Y,U) =
\frac{1}{2}\|x_0-\mu_0\|_{\Sigma_x^{-1}}^2
+
\frac{1}{2}\sum_{k=0}^{N-1} \|r_{f,k}(x_0)\|_{R^{-1}}^2$ holds.
For each $k$,
$\frac{\partial r_{f,k}}{\partial x_0}
= -C_{f,k}(x_0)\,J_{f,k}(x_0)$.
Differentiating
$\frac{1}{2}\|r_{f,k}(x_0)\|_{R^{-1}}^2$
twice gives the Gauss--Newton term
$J_{f,k}(x_0)^\top
C_{f,k}(x_0)^\top
R^{-1}
C_{f,k}(x_0)
J_{f,k}(x_0)$,
plus second-derivative terms weighted by the residual
$r_{f,k}(x_0)$.
Collecting these 
terms over
$k\in\mathcal{K}$ defines $S_f(x_0;Y,U)$ and yields \eqref{eq:hess_decomp}.
The lower bound follows from Weyl's inequality.
By definition, $G_{f,N}(x_0,U)$ and $S_f(x_0;Y,U)$ are symmetric and $\lmin(G_{f,N}(x_0,U)+S_f(x_0;Y,U))\ge \lmin(G_{f,N}(x_0,U))-\|S_f(x_0;Y,U)\|$ holds.
\hfill$\blacksquare$

\medskip
\noindent\textit{Proof of Theorem~\ref{thm:bfim}.}
Apply Theorem~\ref{thm:hess_decomp} with $f=f^\star$.
Conditioned on $(x_0,U)$, the trajectory $\phi_k^{f^\star,U}(x_0)$ is deterministic and~\eqref{eq:obs}
with $\E[v_k\mid x_0,U]=0$.
Hence
$\E[r_{f^\star,k}(x_0;Y,U)\mid x_0,U]=0$
for each $k\in\mathcal{K}$, so the conditional expectation of the
residual-weighted term $S_{f^\star}(x_0;Y,U)$ vanishes,
which proves~\eqref{eq:thm4_01}.
Taking the expectation of both sides yields
$\E\!\left[ H_{f^\star}(x_0;Y,U) \right]=\E\!\left[ G_{f^\star,N}(x_0,U) \right]$.
Since $\Phi_{f^\star}(x_0;Y,U)$ equals
$-\log p_\star(x_0,Y\mid U)$ up to an additive term independent of $x_0$,
their Hessians 
coincide.
Hence, by Definition~\ref{def:info}, $
G_N^o(U)=\E\!\left[ H_{f^\star}(x_0;Y,U) \right]$,
which proves~\eqref{eq:thm4_02}.
\hfill$\blacksquare$

\medskip
\noindent\textit{Proof of Corollary~\ref{cor:proxy_bridge}.}
The upper bound uses \eqref{eq:mu_path} for $\hat x_0(f^0)$ with $s=1$.
For the lower bound, Weyl's inequality gives
$\lmin(H_{f^0}(x^s;Y,U)) \ge \lmin(H_{f^0}(\hat x_0(f^0);Y,U))
- \|H_{f^0}(x^s;Y,U)-H_{f^0}(\hat x_0(f^0);Y,U)\|$,
and $\|x^s-\hat x_0(f^0)\|\le \|\hat{x}_\Delta\|$ for
$x^s:=\hat x_0(f^\star)+s\hat{x}_\Delta$.
The Gauss--Newton bound follows from
Theorem~\ref{thm:hess_decomp} at $x_0=\hat x_0(f^0)$.
\hfill$\blacksquare$

\medskip
\noindent\textit{Proof of Corollary~\ref{cor:certificate}.}
Assume $t:=\|\hat{x}_\Delta\|<\varrho$; otherwise the claim holds.
Averaging Weyl's inequality and $\omega_H(\eta)\le L_H\eta$ over the segment gives $\lmin(\bar H)\ge\mu_{\rm nom}^{-1}-L_Ht/2>0$, so Theorem~\ref{thm:path_gain} yields $(L_H/2)t^2-\mu_{\rm nom}^{-1}t+\|g_\Delta\|\ge0$.
Under the margin test the roots $t_\pm$ are real with $t_+\ge\varrho>t$, and rationalizing $t_-$ gives $t\le t_-\le2\mu_{\rm nom}\|g_\Delta\|\le\varrho$.
The same curvature bound gives $H_{f^0}(x;Y,U)\succ0$ on $\mathcal B_\varrho$, hence at most one stationary point.
\hfill$\blacksquare$
\end{document}